\newtheorem{theorem}{Theorem}
\newtheorem{lemma}{Lemma}
\newtheorem{problem}{Problem}
\newtheorem{proposition}{Proposition}
\newtheorem{definition}{Definition}
\newtheorem{corollary}{Corollary}
\newtheorem{example}{Example}
\begin{document}
%
\title{When can dictionary learning uniquely \\ recover sparse data from subsamples?}
%
%
%
\author{Christopher J. Hillar,
        Friedrich T. Sommer
\thanks{C. J. Hillar conducted the research while at the Mathematical Sciences Research Institute (MSRI), Berkeley, CA, USA and the Redwood Center for Theoretical Neuroscience, UC Berkeley, CA, e-mail: chillar@berkeley.edu.  F. T. Sommer is with the Redwood Center for Theoretical Neuroscience at UC Berkeley, CA, e-mail: fsommer@berkeley.edu.
}}

\maketitle

\begin{abstract}
Sparse coding or sparse dictionary learning has been widely used to recover underlying structure in many kinds of natural data.  Here, we provide conditions guaranteeing when this recovery is universal; that is, when sparse codes and dictionaries are unique (up to natural symmetries).  Our main tool is a useful lemma in combinatorial matrix theory that allows us to derive bounds on the sample sizes guaranteeing such uniqueness under various assumptions for how training data are generated. Whenever the conditions to one of our theorems are met, any sparsity-constrained learning algorithm that succeeds in reconstructing the data recovers the original sparse codes and dictionary. We also discuss potential applications to neuroscience and data analysis.
\end{abstract}



\begin{IEEEkeywords}
Dictionary learning, sparse coding, sparse matrix factorization, uniqueness, compressed sensing, combinatorial matrix theory
\end{IEEEkeywords}

%

\section{Introduction}
%
%
%
%

\IEEEPARstart{I}{ndependent} component analysis \cite{comon1994independent, bell1995information} and dictionary learning with a sparse coding scheme \cite{olshausenfield1996} have become important tools for revealing underlying structure in many different types of data. The common goal of these algorithms is to produce a latent representation of data that exposes  underlying structure.  Such a map from data to representations is often called \textit{coding} or \textit{inference} and the reverse map from representations to estimated data is called \textit{reconstruction}. In the coding step of \cite{olshausenfield1996}, for instance, given a data point $\mathbf{y}$ in a dataset $Y$, one computes a \textit{sparse} code vector $\mathbf{b} = f(\mathbf{y})$ with a small number of nonzero coordinates.  This code vector $\mathbf{b}$ is used to linearly reconstruct $\mathbf{y}$ as 
\begin{equation}\label{line_gen_model1}
\hat{\mathbf{y}} = B  \mathbf{b},
\end{equation}
using a reconstruction matrix $B$ (often called a \textit{basis} or \textit{dictionary} for $Y$). 

The code map $f(\mathbf{y})$ and the matrix $B$ are fit to training data using unsupervised learning.
If reconstruction succeeds and $\hat{\mathbf{y}} = \mathbf{y}$ for all data in $Y$, then representations $\mathbf{b}$ and the dictionary $B$ capture essential structure in the data.  Otherwise, $(\hat{\mathbf{y}} - \mathbf{y})$ provides an error signal to improve the two steps of coding and reconstruction.  This control loop for optimizing a data representation is often referred to as ``self-supervised learning" or ``auto-encoding"  (e.g., \cite{hinton1989}).  In the context of theoretical neuroscience, the sparse vector $\mathbf{b}$ is also sometimes called an ``efficient representation'' of $\mathbf{y}$ because it exposes the coefficients of the ``independent components", thereby minimizing redundancy of description \cite{Attneave1954, barlow1959}.

In the literature, self-supervised learning with a sparseness constraint is called \textit{sparse coding} or \textit{sparse dictionary learning} (SDL).
It has been found empirically that SDL succeeds at revealing (unique) structure for a wide range of sensory data, most notably natural images \cite{olshausenfield1996, hurri1996image, bell1997independent, van1998independent, rehnsommer2007}, natural sounds \cite{bellsejnowski1996, smithlewicki2006, Carlson12}, and even artistic output \cite{hughes2010}.   Importantly for applications, SDL can also reveal \textit{overcomplete} representations; that is, the dimension of $\mathbf{b}$ can exceed that of the data. For example, SDL can reveal structure in data that has been sparsely composed from multiple dictionaries \cite{chen2001}.   

Assume that observed data $Y$ were generated as
\begin{equation}\label{generated_data}
\mathbf{y} = A  \mathbf{a},
\end{equation}
using a fixed \mbox{$n \times m$} \textit{generation matrix} $A$ and sparse $m$-dimensional latent representations $\mathbf{a}$. 
Here we investigate conditions under which the generation matrix and the sparse representations can be recovered from samples $Y$. Specifically, we ask: 
How much data is required for self-supervised learning to uniquely discover $A$ and the corresponding latent representations $\mathbf{a}$? How overcomplete can the latent representation be and still be uniquely identified? 

This study extends earlier pioneering work on uniqueness of sparse matrix factorizations in dictionary learning \cite{aharon2006} although our initial motivation was  to explain findings in theoretical \cite{coulter2010, NIPS10} and computational neuroscience \cite{agarwal2014spatially}.  Specifically, it was proposed in \cite{coulter2010} and \cite{NIPS10} that SDL could be used as a model for how neurons in sensory areas communicate high-dimensional sparse representations through wiring bottlenecks.  
Assume a rather small number of projecting neurons subsample the sensory representations in one brain region and send the signals downstream to another brain region.  In this case, uniqueness of SDL guarantees that self-supervised learning in the downstream region automatically recovers the full original sensory representations. 

Before addressing our main questions, it is informative to first explain their relationship to \textit{compressed sensing} (CS), a recent advance in signal processing \cite{donoho2006compressed, candes2005decoding}.
\begin{figure}
\begin{center}
\textbf{a}) \includegraphics[height=.58 \linewidth]{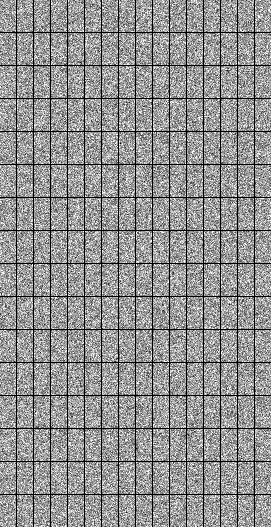}
\textbf{b}) \includegraphics[height=.58 \linewidth]{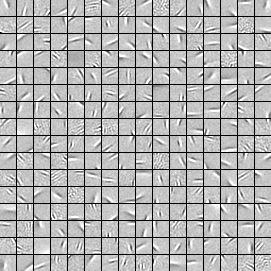}
\end{center}
\caption{\textbf{Sparse coding in a compressed space} reveals original sparse representations but not original dictionaries \cite{coulter2010, NIPS10}. \textbf{a)} Dictionary columns trained in $2 \times$ compressed space $\mathbf{y} = \Phi \mathbf{x}$ starting from $16 \times 16$ natural image patches $\mathbf{x}$ appear random (compression matrix $\Phi$ was formed from i.i.d. standard normals). \textbf{b)} Cross-correlating $16 \times 16$ natural image patches $\mathbf{x}$ with the coordinates of inferred sparse vectors $\mathbf{b} = f( \mathbf{y})$ after compression recovers original sparse coding ``Gabor" dictionary $\Psi$ for natural patches \cite{olshausenfield1996}.}
\label{ACS_fig_random}
\end{figure}
The theory of CS provides techniques to recover data vectors $\mathbf{x}$ with sparse structure after they have been \textit{subsampled} as $\mathbf{y} = \Phi \mathbf{x}$ by a known \textit{compression} matrix $\Phi$.  The sparsity usually enforced is that $\mathbf{x}$ can be expressed as $\mathbf{x} = \Psi \mathbf{a}$ using a known dictionary matrix $\Psi$ and an $m$-dimensional vector $\mathbf{a}$ with at most $k \ll m$ nonzero  entries.  Such vectors $\mathbf{a}$ are called \textit{$k$-sparse}.

Under mild CS conditions on the generation matrix:
\begin{equation}
A = \Phi \Psi,
\label{A_mat}
\end{equation}
which in this case involves both the compression matrix and the sparse dictionary for the data, the theory gives accurate recovery of $\mathbf{a}$ in equation (\ref{generated_data}) (and thus $\mathbf{x}$) from the compressed vector $\mathbf{y}$
 as long as the dimension $n$ of $\mathbf{y}$ satisfies:
\begin{equation}\label{recovcond}
n \geq C k \log (m/k).
\end{equation}
Here, $C$ is a constant independent of $m$, $n$, and $k$.\footnote{For a more detailed discussion of these facts (including proofs) and their relationship to approximation theory, we refer to \cite{Baraniuk2008} and its references.}
In other words, one can easily recover sparse high-dimensional representations $\mathbf{a}$  from ``projections" (\ref{generated_data}) into spaces with a dimension (almost) as small as the count of $\mathbf{a}$'s active entries.

A typical CS assumption on the linear transformation $A \in \mathbb R^{n \times m}$ is the \textit{spark condition}:
\begin{equation}\label{CSassump}
A \mathbf{a}_1 = A \mathbf{a}_2  \ \ \text{for $k$-sparse}  \ \mathbf{a}_1,\mathbf{a}_2 \in \mathbb R^m \ \   \Longrightarrow \  \  \mathbf{a}_1 =  \mathbf{a}_2.
\end{equation}
Note that a generic square matrix $A$ is invertible; thus, (\ref{CSassump}) trivially holds for almost all matrices $A$ whenever $n = m$.  In the interesting regimes of compressed sensing, however, the sample dimension $n$ is significantly smaller than the original data dimension $m$.  Thus, condition (\ref{CSassump}) supplants invertibility of the matrix $A$ with an ``incoherence" among every $2k$ of its columns.  Rather remarkably, even in the critical regime close to equality of (\ref{recovcond}), condition (\ref{CSassump}) holds with very high probability for many ensembles of randomly generated $A$.  In particular, it is easy to produce matrices  satisfying (\ref{CSassump}).

Conditions (\ref{recovcond}) and (\ref{CSassump}) are cornerstones of CS theory, describing when it is possible to recover a code vector $\mathbf{a}$ from a  subsampled measurement $\mathbf{y}$ generated by (\ref{generated_data}) using a known generation matrix $A$.  The dictionary learning problem is  even more difficult: to recover code vectors {\it and} the generation matrix from measurements.  As we shall see, however, the same spark assumption for $A$ in CS theory also guarantees universality of SDL, given enough training samples.

Taken literally, of course, uniqueness in SDL is ill-posed \cite{gleichman2011}.  More precisely, if $P$ is a permutation matrix\footnote{A \textit{permutation matrix} $P$ has binary entries and precisely one $1$ in each row and column (thus, $P\mathbf{v}$ for a column vector $\mathbf{v}$ permutes its entries). Note that $PP^{\top} = P^{\top}P = I$, where $I$ denotes the $m \times m$ identity matrix, and $M^{\top}$ for a matrix $M$ is its transpose.  In particular, we have $P^{-1} = P^{\top}$.} and $D$ is an invertible diagonal matrix, then \[A\mathbf{a} = (AD^{-1}P^{\top})(PD\mathbf{a})\] for each sample $\mathbf{y} = A \mathbf{a}$.  Thus, without access to $A$, one could not discriminate which of $\mathbf{a}$ or $PD\mathbf{a}$ (resp. $A$ or $AD^{-1}P^{\top}$) was the original sparse vector (resp. generation matrix).   This discussion motivates the following definition and problem.

\begin{definition}
When a dataset $Y = \{A \mathbf{a}_1,\ldots,A\mathbf{a}_N\}$ with $k$-sparse $\mathbf{a}_i$ has the property that any other $k$-sparse coding of it is the same up to a uniform permutation and invertible scaling, we say that $Y$ has a \textit{unique sparse coding}.
\end{definition}

\begin{problem}\label{main_pr}
Let $Y = \{\mathbf{y}_1, \dots , \mathbf{y}_N\}$ be generated by $N$ subsamplings $\mathbf{y}_i = A\mathbf{a}_i$ as in (\ref{generated_data}), where \mbox{$A \in \mathbb R^{n \times m}$} satisfies the CS spark condition (\ref{CSassump}) and the $\mathbf{a}_i$ are $k$-sparse.  When does $Y$ have a unique sparse coding?
\end{problem}

In equations, Problem \ref{main_pr} asks when we can guarantee that if  \begin{equation}\label{AaeqBbi}
A \mathbf{a}_i = \mathbf{y}_i =  \hat{\mathbf{y}}_i = B \mathbf{b}_i, \ \ \ i = 1,\ldots,N,
\end{equation}
is another coding of $Y$ using some other matrix $B$ (not necessarily satisfying the spark condition) with $k$-sparse codes $\mathbf{b}_i$, then there is a permutation matrix $P$ and an invertible diagonal matrix $D$ with
\begin{equation}\label{PDaeqB}
A = BPD \ \ \ \text{and} \ \ \mathbf{b}_i = PD \mathbf{a}_i,  \ \  \ i = 1,\ldots,N.
\end{equation}

Here, we determine bounds on the number of samples $N$ guaranteeing that datasets $Y$ have unique sparse codings under various assumptions for how sparse $\mathbf{a}_i$ are produced;  Table \ref{table_N} gives a summary. 
Our main result is the following.

\begin{table*}[!t]
\begin{center}
$\begin{array}{c|cc} \text{Samples } Y = \{A\mathbf{a}_1, \ldots,A\mathbf{a}_N\} \text{ with $k$-sparse } \mathbf{a}_i \in \mathbb R^m \text{ chosen as follows:}   & \text{Sufficient samples } N \text{ for unique recovery:}  \\\hline 
\text{$k{m \choose k}$ in general position from each support set, independent of $A$ satisfying (\ref{CSassump})}  & k{m \choose k}^2 \\\hline     
 \text{given the matrix $A$ satisfying (\ref{CSassump}), draw $k+1$ randomly from each support set}  & (k+1){m \choose k} \ \ \text{(with probability one)} \\\hline 
\text{given the matrix $A$ satisfying (\ref{CSassump}), draw randomly uniformly over supports}  & \frac{k+1}{\beta}{m \choose k} \ \ \text{(with probability $1-\beta$)} \\\hline 
\text{randomly draw $k+1$ from each support set, independent of $A \notin Z$ 
}  & (k+1){m \choose k} \ \ \text{(with probability one)} \\\hline 
\text{randomly draw uniformly over supports, independent of $A \notin Z$ 
}  &  \frac{k+1}{\beta}{m \choose k} \ \ \text{(with probability $1-\beta$)} \\\hline 
\end{array}$
\end{center}
\caption{\textrm{\normalfont 
There are ${m \choose k}$ different support sets specifying which $k$ entries of $\mathbf{a} \in \mathbb R^m$ are nonzero.  Given a support set, a ``randomly drawn" sparse vector $\mathbf{a}$ has $k$ i.i.d. uniform $[0,1]$ random variables placed as the nonzero components determined by the support set.  The set $Z \subset \mathbb R^{n \times m}$ has Lebesgue measure zero. }}
\label{table_N}
\end{table*}

\begin{theorem}\label{unique_thm1}
There exist $N = k{m \choose k}^2$ $k$-sparse vectors $\mathbf{a}_1,\ldots,\mathbf{a}_{N} \in \mathbb R^m$ such that any matrix $A \in \mathbb R^{n \times m}$ satisfying spark condition (\ref{CSassump}) gives rise to a dataset $Y = \{A\mathbf{a}_1, \ldots, A\mathbf{a}_N\}$ having a unique sparse coding.  
\end{theorem}

In fact, there are many such sets of (deterministically produced) $\mathbf{a}_i$;  we give a parameterized family in Section \ref{acsTheoremSection}.

Our next result exploits randomness to give a different construction requiring fewer samples.  Fix a matrix $A$ satisfying CS condition (\ref{CSassump}) and a small $\beta > 0$.  There is a simple random drawing procedure (Definition \ref{random_def} below) for generating $N = \frac{k+1}{\beta}{m \choose k}$ $k$-sparse vectors $\mathbf{a}_1,\ldots,\mathbf{a}_{N}$ such that  $Y = \{A\mathbf{a}_1, \ldots, A\mathbf{a}_N\}$ has a unique sparse coding (with probability at least $1 - \beta$).  
In particular, sparse matrix factorization of sparsely coded data is typically unique for a large enough number of random samples.

A subtle difference between these two results is that the $\mathbf{a}_1,\ldots,\mathbf{a}_{N}$ in Theorem \ref{unique_thm1} are independent of the generation matrix $A$.  The following statement is the best we are able to do in this direction using random methods.   With probability one, a random draw of $(k+1)$ $k$-sparse $\mathbf{a}_i$ from each of the ${m \choose k}$ support sets satisfies the following property.  There is a set of matrices $Z \subset \mathbb R^{n \times m}$ with Lebesgue measure zero such that if $A \notin Z$, then $Y = \{A\mathbf{a}_1,\ldots,A\mathbf{a}_{N}\}$ has a unique sparse coding.  In other words, almost surely such a set of $\mathbf{a}_i$ will have almost every $A$ give rise to $Y$ with unique sparse codings.

The most general previous uniqueness result on this problem is given in \cite[Theorem 3]{aharon2006} with an argument using the singular value decomposition for matrices.  Assuming additionally that the matrix $B$ also obeys CS condition (\ref{CSassump}) and that $\mathbf{b}_1,\ldots,\mathbf{b}_{N}$ satisfy certain assumptions\footnote{For completeness, we list them here:  (i) the \textit{supports} (the indices of nonzero components) of each $\mathbf{a}_i$ and $\mathbf{b}_i$ consist of exactly $k$ elements, (ii) for each possible $k$-element support, there are at least $k+1$ vectors $\mathbf{a}_i$ (resp. $\mathbf{b}_i$) having this support, (iii) any $k + 1$ vectors $\mathbf{a}_i$ (resp. $\mathbf{b}_i$) having the same support span a $k$-dimensional space, and (iv) any $k + 1$ vectors $\mathbf{a}_i$ (resp. $\mathbf{b}_i$) having different supports span a $(k + 1)$-dimensional space.}\label{aharon_cond}, the authors of \cite{aharon2006} show that $(k+1){m \choose k}$ many equations (\ref{AaeqBbi}) imply (\ref{PDaeqB}). 
With Theorem \ref{unique_thm1}, we close a gap in the literature by finding the weakest possible assumptions on recovery codes and dictionaries that guarantee uniqueness (although there is room for improvement in the size $N$ of our set of samples).

As a motivating example of Theorem \ref{unique_thm1}, consider applying SDL on grayscale natural image patches $\mathbf{x}$. It is known empirically that after SDL converges, any image patch can be represented as $\mathbf{x} = \Psi \mathbf{a}$ where $\mathbf{a}$ is a sparse vector and the matrix $\Psi$ contains two-dimensional Gabor functions \cite{olshausenfield1996} (as displayed in Fig.~\ref{ACS_fig_random}\textbf{b}). Consider now SDL on subsampled natural image patches $\mathbf{y} = \Phi \mathbf{x} = \Phi \Psi \mathbf{a}$, with $\Phi$ a compression matrix. If $A = \Phi \Psi$ satisfies the spark condition, then any successful sparse coding $\hat{\mathbf{y}} = B \mathbf{b}$ of enough such $\mathbf{y}$ should produce codes $\mathbf{b} = f(\mathbf{y})$ that are equal (up to symmetry) to the sparse vectors $\mathbf{a}$ that resulted from learning on the full image patches.  Indeed, this is the empirical finding \cite{NIPS10}, even when we choose $\mathbf{x}$ uniformly from a database of natural image patches \cite{van1998independent} rather than generating them from special $\mathbf{a}$. Note that although the dictionary trained on compressed images appears uninformative (Fig.~\ref{ACS_fig_random}\textbf{a}), the cross-correlation of the inferred sparse vectors $\mathbf{b}$ with the full image patches contains two-dimensional Gabor functions (Fig.~\ref{ACS_fig_random}\textbf{b}). In fact, the cross-correlation matrix is equal to the dictionary $\Psi$ for constructing the original patches (again, up to natural symmetries).

The organization of this paper is as follows.  In Section \ref{acsTheoremSection}, we state our main tool from combinatorial matrix theory (Lemma \ref{subspaceLem_}) and then derive from it Theorem \ref{unique_thm1}.
We next provide precise statements in Section  \ref{proof_prob_Section} of our randomized theorems and then prove them in Section \ref{prob_proofs}.  The final section gives a short discussion of how our results fit into the general sparse dictionary learning literature and how they might apply to neuroscience.  An appendix contains the proof of Lemma \ref{subspaceLem_}.

\section{Deterministic uniqueness theorem}\label{acsTheoremSection}

In what follows, we will use the notation $[m]$ for the set $\{1,\ldots,m\}$, and ${ [m] \choose k}$ for the set of $k$-element subsets of $[m]$.  Also, recall that Span$\{\mathbf{v}_1,\ldots,\mathbf{v}_{\ell}\}$ for real vectors $\mathbf{v}_1,\ldots,\mathbf{v}_{\ell}$ is the vector space consisting of their $\mathbb R$-linear span: \[ \text{Span}\{\mathbf{v}_1,\ldots,\mathbf{v}_{\ell}\} = \left\{ \sum_{i=1}^{\ell} t_i \mathbf{v}_i : \ t_1,\ldots,t_{\ell} \in \mathbb R \right\}.\]  Finally, for a subset $S \subseteq [m]$ and a matrix $A$ with columns $\{A_1,\ldots,A_m\}$, we define \[ \text{\rm Span}\{A_S\} = \text{\rm Span}\{A_{s}: s \in S\}. \]

Before proving Theorem \ref{unique_thm1} in full generality, it is illustrative to consider when $k = 1$.  In this simple case, we only need $N = m$ samples for uniqueness.  Set $\mathbf{a}_i = \mathbf{e}_i$ ($i= 1,\ldots,m$) to be the standard basis column vectors in $\mathbb R^m$.  Assuming that (\ref{AaeqBbi}) holds for some matrix $B$ and $1$-sparse $\mathbf{b}_i$, it follows that
\begin{equation}\label{keqonecasef}
A\mathbf{e}_{i} =  B\,c_i \mathbf{e}_{\pi(i)}, \ \ i = 1,\ldots,m,
\end{equation}
for some map $\pi: \{1,\ldots,m\} \to \{1,\ldots,m\}$ and $c_i \in \mathbb R$. Note that if $c_i = 0$ for some $i$, then $A\mathbf{e}_i = 0$, contradicting (\ref{CSassump}).  

Next, we show that $\pi$ is necessarily injective (and thus is a permutation).
Suppose that $\pi(i) = \pi(j)$; then, \[A\mathbf{e}_i = c_iB\mathbf{e}_{\pi(i)} = c_i B\mathbf{e}_{\pi(j)} = \frac{c_i}{c_j} Bc_j \mathbf{e}_{\pi(j)} = \frac{c_i}{c_j} A \mathbf{e}_j.\]
Again by (\ref{CSassump}) this is only possible if $i = j$.  Thus, $\pi$ is a permutation.

Let $P$ and $D$ denote the following permutation and diagonal matrices, respectively:
\begin{equation}\label{permeqn}
P = \left( \mathbf{e}_{\pi(1)} \cdots \mathbf{e}_{\pi(m)}\right), \ \ D = \left(\begin{array}{ccc}c_1 & \cdots & 0 \\\vdots & \ddots & \vdots \\0 & \cdots & c_m\end{array}\right).
\end{equation}
The matrix formed by stacking left-to-right the column vectors on the right-hand side of (\ref{keqonecasef}) is easily seen to satisfy: \[ \left(c_1 B\mathbf{e}_{\pi(1)} \cdots c_m B \mathbf{e}_{\pi(m)}\right) = BPD.\]  On the other hand, the columns $A\mathbf{e}_i$ form the matrix $A$.
Taken together, therefore, equations (\ref{keqonecasef}) imply that $A = BPD$.  

Note that the identity $A = BPD$ already implies in general the recovery result $\mathbf{b}_i = PD\mathbf{a}_i$ for all $i$.  This follows since \[A \mathbf{a}_i = B\mathbf{b}_i = A (D^{-1}P^{\top}\mathbf{b}_i)\] gives $\mathbf{b}_i = PD \mathbf{a}_i$ from (\ref{CSassump}) because both $\mathbf{a}_i$ and  $D^{-1}P^{\top}\mathbf{b}_i$ are $k$-sparse vectors.  

Unfortunately, the proof for larger $k$ is more challenging.  The difficulty is that in general it is nontrivial to produce $P$ and $D$ as in (\ref{permeqn}) using only assumptions (\ref{CSassump}) and (\ref{AaeqBbi}).   For this, our main results depend on combinatorial linear algebra.

\begin{lemma}[Main Lemma]\label{subspaceLem_}
Fix positive integers $n$ and $k < m$.  Let $A \in \mathbb R^{n \times m}$ and $B \in \mathbb R^{n \times m}$ have columns $\{A_1,\ldots,A_m\}$ and $\{B_1,\ldots,B_m\}$, respectively.  Suppose that $A$ satisfies spark condition (\ref{CSassump}) and that $\pi: {[m] \choose k} \to {[m] \choose k}$ 
is a map such that 
\begin{equation}\label{eq:spanhypoth}
\text{\rm Span}\{A_S\} = \text{\rm Span}\{B_{\pi(S)}\}, \ \ \text{for all } S \in {[m] \choose k}.
\end{equation}
Then there exists a permutation matrix $P \in \mathbb R^{m \times m}$ and an invertible diagonal matrix $D \in \mathbb R^{m \times m}$ such that $A = BPD$.
In particular, the matrix $B$ also satisfies the spark condition.
\end{lemma}

We defer the proof of this lemma for $k > 1$ to the Appendix.

\begin{IEEEproof}[Proof of Theorem \ref{unique_thm1}]
First, we produce a set of $N = k {m \choose k}^2$ vectors $\mathbf{s}_i \in \mathbb R^k$ in general linear position (i.e., any subset of $k$ of them are linearly independent).  Specifically, let $\alpha_1, \ldots, \alpha_{N}$ be any distinct numbers.  Then the columns of the $k \times N$ matrix $V = \left(\alpha_j^i\right)_{i,j=1}^{k, N}$ are in general linear position (since the $\alpha_j$ are distinct, any $k \times k$ ``Vandermonde" sub-determinant is nonzero). 
Next, form the $k$-sparse vectors $\mathbf{a}_1,\ldots,\mathbf{a}_N \in \mathbb R^m$ by taking $\mathbf{s}_i$ for the support values of $\mathbf{a}_i$ where each possible support set is represented $k{m \choose k}$ times.

We claim that these $\mathbf{a}_i$ always produce unique sparse codings. To prove this, suppose that $A$ satisfies the spark condition and set $\mathbf{y}_i = A\mathbf{a}_i$.  By general linear position and the spark condition, for every subset of $k$ vectors $\mathbf{y}_{i_1},\ldots,\mathbf{y}_{i_k}$ generated using the same support $S$, we have Span$\{ \mathbf{y}_{i_1},\ldots,\mathbf{y}_{i_k}\} = \text{Span}\{A_S\}$ is $k$-dimensional.
Now suppose an alternate factorization $\mathbf{y}_{i} = B \mathbf{b}_i$, with $\mathbf{b}_i$ $k$-sparse.  Since there are $k {m \choose k}$ vectors $\mathbf{y}$ for each support $S$, the ``pigeon-hole principle" implies that there are at least $k$ vectors $\mathbf{b}$ in the new factorization  that have the same support $S'$.  Hence, Span$\{A_S\} \subseteq \text{Span}\{B_{S'}\}$; and since $\dim(\text{Span}\{ B_{S'} \}) \leq k$, we have Span$\{A_S\} = \text{Span}\{B_{S'}\}$.  Applying Lemma \ref{subspaceLem_}, we obtain that $B$ differs from $A$ by a permutation and invertible scaling.
\end{IEEEproof}

We close this section by using a special case of Theorem \ref{unique_thm1} to solve a problem in recreational mathematics \cite{hillar2010}.

\begin{corollary}
Fix positive integers $k < n$.  Those $A \in \mathbb R^{n \times n}$ that satisfy (\ref{CSassump}) and have the property that $A \mathbf{a}$ is $k$-sparse for all $k$-sparse $\mathbf{a}$ are the matrices $PD$, where $P$ and $D$ run over permutation and invertible diagonal matrices, respectively.  
\end{corollary}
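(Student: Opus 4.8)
The plan is to establish the claimed set equality by proving two inclusions, with all of the combinatorial heavy lifting outsourced to Theorem~\ref{deterministic_Thm}; the actual work is just verifying that its hypotheses hold in this specialized regime.

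First I would dispatch the easy inclusion: every matrix of the form $A = PD$ does have the stated property. If $\mathbf{a}$ is $k$-sparse then $D\mathbf{a}$ has exactly the same support as $\mathbf{a}$ (since $D$ is invertible and diagonal, it neither creates nor destroys nonzero coordinates), hence is again $k$-sparse; and left-multiplication by the permutation matrix $P$ merely reindexes coordinates, so it preserves the number of nonzero entries. Thus $A\mathbf{a} = P(D\mathbf{a})$ is $k$-sparse, giving the inclusion of the set $\{PD\}$ into the set of invertible matrices with the property.

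For the reverse (and main) inclusion, suppose $A \in \mathbb{R}^{n\times n}$ is invertible and sends every $k$-sparse vector to a $k$-sparse vector. I would apply Theorem~\ref{deterministic_Thm} with $m := n$ (legitimate since $k < n = m$) and with the learned dictionary taken to be $B := I$. Let $\mathbf{a}_1,\ldots,\mathbf{a}_{N_D}$ be the $k$-sparse vectors produced by the deterministic scheme, and set $\mathbf{b}_i := A\mathbf{a}_i$. By the assumed property of $A$, each $\mathbf{b}_i$ is $k$-sparse, and trivially $A\mathbf{a}_i = \mathbf{b}_i = B\mathbf{b}_i$ for all $i$. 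Theorem~\ref{deterministic_Thm} then yields a permutation matrix $P$ and an invertible diagonal $D$ with $A = BPD = PD$, which is exactly the desired conclusion.

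The only point demanding care --- and hence what I regard as the crux --- is verifying that $A$ satisfies CS condition~(\ref{CSassump}), the sole hypothesis Theorem~\ref{deterministic_Thm} places on the sampling matrix. This is immediate: an invertible $A$ is injective on all of $\mathbb{R}^n$, so $A\mathbf{a}_1 = A\mathbf{a}_2$ forces $\mathbf{a}_1 = \mathbf{a}_2$ for \emph{any} vectors, $k$-sparse or not. It is worth emphasizing that Theorem~\ref{deterministic_Thm} imposes no condition whatsoever on $B$, so taking $B = I$ (with $\mathbf{b}_i = A\mathbf{a}_i$) is admissible even though $I$ is not a compressive matrix; this is precisely what lets the general uniqueness theorem collapse into the clean characterization asserted in the corollary.
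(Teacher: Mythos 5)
Your proposal is correct and follows essentially the same route as the paper: the paper likewise specializes Theorem~\ref{deterministic_Thm} to $k < n = m$ with $B = I$ and $\mathbf{b}_i = A\mathbf{a}_i$ (each $k$-sparse by hypothesis on $A$), relying on invertibility of $A$ to guarantee condition~(\ref{CSassump}), and notes that the reverse inclusion is trivial. Your write-up simply makes explicit the details the paper leaves implicit, such as the verification of the spark condition and the support-preservation argument for $PD$.
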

\begin{IEEEproof}
Let $\mathbf{a}_1,\ldots,\mathbf{a}_N$ be the $k$-sparse vectors from Theorem \ref{unique_thm1}.  Let $A$ be any matrix with the stated property and define $k$-sparse vectors $\mathbf{b}_i = A\mathbf{a}_i$.  Then, with $B$ as the identity matrix, the set $\{\mathbf{b}_1, \ldots, \mathbf{b}_N\}$ is another $k$-sparse coding of $A\mathbf{a}_i$.  Theorem \ref{unique_thm1} now implies that $A = PD$ for some permutation matrix $P$ and invertible diagonal matrix $D$.
\end{IEEEproof}

\section{Statements of probabilistic theorems}\label{proof_prob_Section}

We next give precise statements of our probabilistic versions of Theorem \ref{unique_thm1}, all of which rely on the following construction. 
\begin{definition}[Random drawing of sparse vectors]\label{random_def}
Given the support set for its $k$ nonzero entries, a \textit{random draw} of $\mathbf{a}$ is the $k$-sparse vector with support entries chosen uniformly from the interval $[0,1] \subset \mathbb R$, independently.  When a support set is not specified, a \textit{random draw} is a choice of one support set uniformly from all $T = {m \choose k}$ of them and then a random draw.
\end{definition}
 
\begin{theorem}\label{rand_thm1}
Fix   $n$, $k < m$, and a generation matrix $A \in \mathbb R^{n \times m}$ satisfying spark condition (\ref{CSassump}).  If $(k+1)$ $k$-sparse $\mathbf{a}_i$ are randomly drawn from each support set, then $Y = \{A\mathbf{a}_1,\ldots,A\mathbf{a}_{N}\}$ has a unique sparse coding with probability one.
\end{theorem}

The following is a direct application of this result, partially answering one of our questions from the introduction.

\begin{corollary}\label{feas_cor}
Suppose  $m$, $n$, and $k$ satisfy inequality (\ref{recovcond}).  With probability one, a random\footnote{Many natural ensembles of random matrices work, e.g., \cite[Section 4]{Baraniuk2008}.} $n \times m$ generation matrix $A$ satisfies (\ref{CSassump}).  Fixing such an $A$, with probability one a dataset $Y = \{A\mathbf{a}_1, \dots , A\mathbf{a}_{N}\}$ generated from $N = (k+1){m \choose k}$ $k$-sparse samples $\mathbf{a}_i$ uniquely prescribes $A$ and these sparse vectors $\mathbf{a}_i$ up to a fixed permutation and scaling.
\end{corollary}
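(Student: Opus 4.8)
The plan is to assemble Corollary~\ref{feas_cor} from two ingredients that are already available: a standard compressed sensing fact for its first sentence, and Theorem~\ref{rand_thm1} for the remainder. The two probabilistic claims live over different sources of randomness---the first over the draw of the matrix $A$, the second over the draw of the sparse vectors $\mathbf{a}_i$ for a now-fixed $A$---so the argument is a clean chaining of conditional events rather than a single joint probability computation.

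First I would establish that a random $n \times m$ matrix $A$ satisfies condition (\ref{CSassump}) with high probability when the dimensions obey (\ref{recovcond}). This is exactly the content of the restricted isometry property (RIP) machinery: for the standard ensembles (Gaussian, Bernoulli, etc.) and $n \gtrsim k \log(m/k)$, concentration of measure together with a union bound over the $\binom{m}{2k}$ column supports of size $2k$ shows that, with high probability, $A$ has restricted isometry constant $\delta_{2k} < 1$; see \cite[Section 4]{Baraniuk2008}. But $\delta_{2k} < 1$ forces every $2k$ columns of $A$ to be linearly independent, which by the footnote following (\ref{CSassump}) is precisely the spark condition $\sigma(A) > 2k$, i.e.\ (\ref{CSassump}). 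Hence a randomly generated $A$ satisfies (\ref{CSassump}) with high probability.

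Next, conditioning on a realization $A$ that satisfies (\ref{CSassump})---an event of high probability by the previous step---I would apply Theorem~\ref{rand_thm1} verbatim. It guarantees that, with probability one over the random draw of the $N_R$ $k$-sparse vectors $\mathbf{a}_i$, the dataset $Y = \{A\mathbf{a}_1,\ldots,A\mathbf{a}_{N_R}\}$ has a unique sparse coding. Unwinding the Definition of unique sparse coding, this means any factorization $A\mathbf{a}_i = B\mathbf{b}_i$ with $B \in \mathbb{R}^{n \times m}$ and $k$-sparse $\mathbf{b}_i$ must agree with the original up to the symmetry (\ref{PDaeqB}): there are a permutation matrix $P$ and an invertible diagonal matrix $D$ with $A = BPD$ and $\mathbf{b}_i = PD\mathbf{a}_i$, the code identity following from the matrix identity via Remark~\ref{mainThmRemark}. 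Thus $Y$ determines $A$ and the $\mathbf{a}_i$ up to a fixed permutation and scaling, which is the claim.

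Since both inputs are quoted results, there is no genuine mathematical obstacle; the only point requiring care is the bookkeeping of the two separate randomizations. The ``high probability'' qualifier attaches solely to the selection of $A$, after which the statement is deterministic in $A$ and ``probability one'' in the $\mathbf{a}_i$. I would therefore state the corollary as the conditional it is (``fixing such an $A$'') so that no joint distribution over the pair $(A, \{\mathbf{a}_i\})$ need be introduced. The substantive work lives entirely in Theorem~\ref{rand_thm1}; the corollary merely certifies that the hypothesis (\ref{CSassump}) of that theorem is met by a feasible, randomly generated sampling matrix in the compressed-sensing regime (\ref{recovcond}).
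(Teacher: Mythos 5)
Your proposal is correct and matches the paper's intended argument: the corollary is stated in the paper as an immediate consequence of Theorem~\ref{rand_thm1} together with the standard compressed-sensing fact (cited to \cite[Section 4]{Baraniuk2008}) that random matrices in the regime (\ref{recovcond}) satisfy the spark condition (\ref{CSassump}) with high probability, and your chaining of the two randomizations plus the use of Remark~\ref{mainThmRemark} for code recovery is exactly that reasoning made explicit.
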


We now state our third theorem.  Note that an \textit{algebraic set} is a solution to a finite set of polynomial equations.  

\begin{theorem}\label{rand_thm2}
Fix positive integers $n$ and $k < m$.  If $(k+1)$ $k$-sparse $\mathbf{a}_i$ are randomly drawn from each support set, then with probability one the following holds.  There is an algebraic set $Z \subset \mathbb R^{n \times m}$ of Lebesgue measure zero with the following property:  if $A \notin Z$, then $Y = \{A\mathbf{a}_1,\ldots,A\mathbf{a}_{N}\}$ has a unique sparse coding.\footnote{The spark condition itself defines an algebraic set of matrices.  Thus, the measure zero set $Z$ here can be defined without mention of (\ref{CSassump}).}
\end{theorem}

\begin{corollary}\label{feas_cor2}
Suppose  $m$, $n$, and $k$ obey inequality (\ref{recovcond}).  With probability one, a random draw of $N = (k+1){m \choose k}$ $k$-sparse samples $\mathbf{a}_i$ satisfies:  almost every matrix $A$ gives $Y = \{A\mathbf{a}_1,\ldots,A\mathbf{a}_{N}\}$ a unique sparse coding.
\end{corollary}

It is somewhat surprising that Theorems \ref{rand_thm1} and \ref{rand_thm2} do not automatically prove Theorem \ref{unique_thm1}.  We illustrate some of the subtlety between these three results as follows.   Set\[g(A,a) = A-a, \ \  A \in [0,1], \ a \in [0,1],\] and consider the following three statements. (i) There is an $a$ satisfying: for every $A$, we have $g(A,a) \neq 0$.  (ii) Fix $A$.  With probability one, a random $a$ will satisfy $g(A,a) \neq 0$.  (iii) With probability one, a random $a$ will satisfy: for almost every $A$ (i.e., outside of a set of Lebesgue measure zero), we have $g(A,a) \neq 0$.  While (ii) and (iii) are easily seen to be true, statement (i) is false (for this particular $g$).   Given this possible technicality, it is interesting that a deterministic statement of the form (i) can be made for the general uniqueness problem we study in this paper.

Experimental verification that SDL robustly satisfies both implications (\ref{PDaeqB}) of the above theorems appears in \cite{aharon2006, NIPS10}.  In particular, even if samplings (\ref{generated_data}) are not exact but contain measurement inaccuracy, SDL is still observed to succeed.  These findings suggest that noisy versions of the above results and \cite[Theorem 3]{aharon2006} hold, a focus of future work.

\section{Proofs of probabilistic theorems}\label{prob_proofs}

In this section, we prove Theorems \ref{rand_thm1} and \ref{rand_thm2}.   Our general perspective is algebraic.  We consider the $n \times m$ generation matrix $A$ as a matrix of $nm$ indeterminates $A_{ij}$ ($i=1,\ldots,n$; $j = 1,\ldots,m$).  Such matrices become actual real-valued matrices when real numbers are substituted for all the indeterminates.  For each support set $S = \{S^1,\ldots,S^k\} \in {[m] \choose k}$ with $S^1 < \cdots < S^k$, we also consider the following $k$-sparse vectors of indeterminates:
\begin{equation}\label{a_samplings}
\mathbf{a}_{S,\ell} = t^1_{S,\ell} \mathbf{e}_{S^1} + \cdots  + t^k_{S,\ell} \mathbf{e}_{S^k}, \ \  \ell = 1,\ldots, k+1.
\end{equation}
In a moment, each indeterminate $t^j_{S,\ell}$ will represent an i.i.d. draw from the  uniform distribution on the interval $[0,1]$.  

Our main object of interest is the following dataset of $N = (k+1){m \choose k}$ subsampled vectors of indeterminates:
\begin{equation}\label{Y_indeterm}
Y = \left\{ A\mathbf{a}_{S,\ell} : \ S \in {[m] \choose k}, \ \ell = 1,\ldots,k+1\right\}.
\end{equation}
We would like to show that under appropriate substitutions for the indeterminates, the dataset $Y$ has a unique sparse coding.  This construction follows the intuition in \cite[Theorem 3]{aharon2006}, which is to cover every subspace spanned by $k$ columns of $A$ with a sufficient number of generic vectors. 

Consider any $k$ samples $\{A\mathbf{a}_{S_1,\ell_1}, ,\ldots, A\mathbf{a}_{S_k,\ell_k}\}$, and let $M$ be the $n \times k$ matrix formed by stacking them horizontally:
\[ M = (A\mathbf{a}_{S_1,\ell_1}  \cdots  A\mathbf{a}_{S_k,\ell_k}).\]
Let $g(A,\mathbf{t})$ be the polynomial defined as the sum of the squares of all $k \times k$ sub-determinants of $M$.  This polynomial involves the  indeterminates $t^j_{S_1,\ell_1},\ldots, t^j_{S_k,\ell_k}$ ($j = 1,\ldots,k$), as well as indeterminates contained in columns $A_{S_1},\ldots,A_{S_k}$.  By construction, the polynomial $g$ evaluates to a nonzero number for a substitution of real numbers for the indeterminates if and only if the  $k$ columns of $M$ span a $k$-dimensional space.\footnote{If column rank of $M$ is $k$, then (since row rank equals column rank) there are $k$ linearly independent rows;  the determinant of this submatrix is nonzero.}

Fix a real-valued matrix $A$ satisfying spark condition (\ref{CSassump}).  We first claim that for almost every substitution for the indeterminates $t^j_{S,\ell}$ with real numbers, the polynomial $g$ evaluates to a nonzero real number.  Note  this would imply that with probability one all subsets of $Y$ with $k$ elements are linearly independent (as a finite union of sets with measure zero also has measure zero).  Somewhat surprisingly, to verify this claim about $g$, it is enough to show that \textit{some} substitution of real numbers for the indeterminates gives a nonzero real value for $g$.  Perhaps because we feel this fact should be more well-known, we state it here.  It can be proved by induction on the number of indeterminates and Fubini's theorem in real analysis (the base case being that a nonzero univariate polynomial has finitely many roots).  We note that the same statement also holds for real analytic $g$.

\begin{proposition}\label{poly_prop}
If a polynomial $g$ is not the zero polynomial, its zeroes are a set with Lebesgue measure zero.
\end{proposition}

\begin{lemma}\label{k_lin_ind_lem}
Fix $A \in \mathbb R^{n \times m}$ satisfying (\ref{CSassump}). With probability one, no subset of $k$ vectors from $Y$ is linearly dependent.
\end{lemma}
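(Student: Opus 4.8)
The plan is to prove the probability-one statement by a union bound over the finitely many ways of selecting $k$ vectors from $Y$, reducing everything to a single claim: for each fixed selection, the probability that those $k$ vectors are linearly dependent is zero. Since $Y$ has $N_R = (k+1){m \choose k}$ elements, there are only ${N_R \choose k}$ such selections, and a finite union of probability-zero events still has probability zero, so it suffices to handle one selection at a time.

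Fix a selection, forming the matrix $M = (A\mathbf{a}_{S_1,\ell_1} \cdots A\mathbf{a}_{S_k,\ell_k})$ and the polynomial $g(A,t)$ equal to the sum of squares of the $k \times k$ minors of $M$, exactly as set up above. With the real matrix $A$ fixed, $g$ becomes a polynomial in the indeterminates $t^j_{S_i,\ell_i}$ alone, and the $k$ selected vectors are linearly dependent precisely when $g$ evaluates to zero. By the elementary fact already cited, if $g$ is not the identically-zero polynomial then its vanishing locus has Lebesgue measure zero; since the $t^j_{S_i,\ell_i}$ are drawn i.i.d. uniformly from $[0,1]$, this means the selected vectors are linearly independent with probability one. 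Thus the whole lemma reduces to exhibiting a \emph{single} real substitution for the $t$'s at which $g \neq 0$.

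To produce such a substitution I would make each column of $M$ coincide with a distinct column of $A$ and then invoke condition (\ref{CSassump}). For a vector supported on $S_i \in {[m] \choose k}$, setting $t^j_{S_i,\ell_i} = 1$ on the single index $c_i \in S_i$ and $0$ on the others gives $A\mathbf{a}_{S_i,\ell_i} = A_{c_i}$. If the $c_i$ can be chosen distinct, then $M = (A_{c_1} \cdots A_{c_k})$ consists of $k$ distinct columns of $A$, which are linearly independent by (\ref{CSassump}) (the spark condition forces every set of at most $\min\{m,2k\}\ge k$ columns to be independent); hence some $k \times k$ minor is nonzero and $g \neq 0$ at this point.

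The one point needing care — and the step I expect to be the main obstacle — is guaranteeing that distinct representatives $c_i \in S_i$ exist even though the supports $S_i$ may repeat, since the selection only forces the pairs $(S_i,\ell_i)$ to be distinct, not the supports themselves. This is a system-of-distinct-representatives question, settled by Hall's marriage theorem applied to the bipartite incidence between the $k$ chosen slots and the index set $[m]$: for any subset $I$ of slots, the neighborhood $\bigcup_{i \in I} S_i$ has size at least $k$ (each $|S_i| = k$), which is at least $|I|$ because only $k$ vectors were selected. Hall's condition therefore holds, a matching assigning distinct $c_i \in S_i$ exists, and the argument closes.
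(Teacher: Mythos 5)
Your proposal is correct and follows essentially the same route as the paper: reduce to showing the sum-of-squares-of-minors polynomial $g$ is not identically zero in the $t^j_{S,\ell}$ (with $A$ fixed), then exhibit one substitution making the columns of $M$ equal to $k$ distinct columns of $A$, which are independent by (\ref{CSassump}). The only difference is that you make explicit two points the paper leaves implicit --- the union bound over the finitely many $k$-element selections from $Y$, and the existence of distinct representatives $c_i \in S_i$ (via Hall's theorem, which holds trivially here since each $|S_i| = k \geq |I|$) --- where the paper simply asserts that appropriate $t^j_{S,\ell}$ ``clearly'' exist.
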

\begin{proof}
By Proposition \ref{poly_prop}, we need only show that each of the above $g$ can be nonzero.  However, this is clear since we can choose $t^j_{S,\ell}$ so that $M$ consists of $k$ different columns of $A$, which are linearly independent by assumption.
\end{proof}

\begin{lemma}\label{eq_supp_lem}
Fix $A \in \mathbb R^{n \times m}$ satisfying (\ref{CSassump}).  With probability one, if $k+1$ vectors $\mathbf{y}_j \in Y$ are linearly dependent, then all of the $k+1$ vectors $\mathbf{a}_j$ have the same supports.
\end{lemma}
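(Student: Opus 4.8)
The plan is to follow the template of the proof of Lemma~\ref{k_lin_ind_lem}, but now with $(k+1)\times(k+1)$ minors. Fix a choice of $k+1$ distinct vectors $\mathbf{y}_1,\ldots,\mathbf{y}_{k+1}$ from $Y$ whose underlying supports $S_1,\ldots,S_{k+1}$ are \emph{not} all equal, and let $M$ be the $n\times(k+1)$ matrix obtained by stacking them as columns. Let $h(A,t)$ be the sum of the squares of all $(k+1)\times(k+1)$ minors of $M$; this is a polynomial in the coordinate indeterminates $t^j_{S_i,\ell_i}$ (with the entries of the fixed real matrix $A$ as constants), and it evaluates to a nonzero number under a real substitution precisely when the chosen vectors are linearly independent. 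As in Lemma~\ref{k_lin_ind_lem}, it therefore suffices to exhibit \emph{one} substitution of the coordinates making $\mathbf{y}_1,\ldots,\mathbf{y}_{k+1}$ linearly independent: this shows $h\not\equiv 0$, and a nonzero polynomial vanishes only on a set of Lebesgue measure zero, so with probability one the random draw makes these particular $k+1$ vectors independent.

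The main work --- and the main obstacle --- is producing such a substitution from the hypothesis that the supports are not all equal, using only spark condition (\ref{CSassump}). Writing $W_i=\mathrm{Span}\{A_{S_i}\}$, each $W_i$ is $k$-dimensional by (\ref{CSassump}), and the vector $A\mathbf{a}_{S_i,\ell_i}$ ranges over all of $W_i$ as its coordinates vary; so I must find independent $\mathbf{y}_i\in W_i$. I would build these greedily. Since the supports are not all equal, I may reorder so that $S_1\neq S_{k+1}$ and pick an index $p\in S_1\setminus S_{k+1}$. Setting $\mathbf{y}_1=A_p\in W_1$, I then choose $\mathbf{y}_2,\ldots,\mathbf{y}_k$ one at a time, keeping the partial list independent; this is always possible because at each such step the span of the already-chosen vectors has dimension at most $k-1$, strictly smaller than $\dim W_j=k$, so $W_j$ cannot be contained in it.

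The last step is where the spark condition does the real work. After choosing $\mathbf{y}_1,\ldots,\mathbf{y}_k$, their span $V_k$ is $k$-dimensional and contains $A_p$. On the other hand $A_p\notin W_{k+1}$: the $k+1$ columns $\{A_s:s\in S_{k+1}\}\cup\{A_p\}$ are distinct (as $p\notin S_{k+1}$) and $k+1\le 2k$, so they are linearly independent by (\ref{CSassump}). Hence $V_k\neq W_{k+1}$, so $W_{k+1}\not\subseteq V_k$ and I can choose $\mathbf{y}_{k+1}\in W_{k+1}\setminus V_k$, completing an independent family. Because distinct elements of $Y$ carry disjoint sets of coordinate indeterminates, all these choices are realized simultaneously by a single real substitution, giving $h\not\equiv 0$. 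Finally, taking the union over the finitely many $(k+1)$-subsets of $Y$ with non-equal supports, with probability one none of them is linearly dependent; equivalently, any $k+1$ linearly dependent vectors of $Y$ share a common support. I expect the greedy construction of the last paragraph --- in particular arranging $V_k\neq W_{k+1}$ via the column $A_p$ --- to be the only delicate point; everything else parallels Lemma~\ref{k_lin_ind_lem}.
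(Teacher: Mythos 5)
Correct, and essentially the same proof as the paper's: you use the identical polynomial $h$ (the sum of squares of all $(k+1)\times(k+1)$ minors of the stacked matrix), the identical ``exhibit one nonvanishing substitution, hence nonzero almost everywhere'' argument, and the identical union over the finitely many $(k+1)$-subsets of $Y$ with non-equal supports. Your greedy construction of the witness substitution (anchored at a column $A_p$ with $p\in S_1\setminus S_{k+1}$, plus one application of (\ref{CSassump}) to get $A_p\notin\mathrm{Span}\{A_{S_{k+1}}\}$) is a valid filling-in of the step the paper dismisses as ``clearly''; a slightly shorter witness is to pick distinct representatives $p_i\in S_i$ --- these exist by Hall's theorem, since the union of the supports has more than $k$ indices --- and set $\mathbf{y}_i=A_{p_i}$, which are linearly independent directly by (\ref{CSassump}) because $k+1\le\min\{m,2k\}$.
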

\begin{proof}
Consider the polynomial $h$ which is the sum of the squares of all $(k+1) \times (k+1)$ minors of the $n \times (k+1)$ matrix of columns $\mathbf{y}_1, \ldots, \mathbf{y}_{k+1}$.
Suppose that the supports $S_1, \ldots, S_{k+1}$ of the $k+1$ vectors in $Y$ have more than $k$ different indices.  Then since any $k+1$ columns of $A$ are linearly independent, there is a setting for the $t^j_{S,\ell}$ such that $h$ is a nonzero real number.   As before, it follows that for almost every substitution of the $t^j_{S,\ell}$ with real numbers, the polynomial $h$ evaluates to a nonzero number.  In particular, with probability one, if the chosen $\mathbf{y}$ are linearly dependent, then $S_1 = \cdots = S_{k+1}$.
\end{proof}

\begin{proof}[Proof of Theorem \ref{rand_thm1}]
Fix a real matrix $A$ satisfying (\ref{CSassump}) and consider any alternate factorization $\mathbf{y}_i = B\mathbf{b}_i$ for $Y$.  Given a support set $S \in {[m] \choose k}$, let $J(S) = \{ j : \text{support}(\mathbf{b}_j) \subseteq S\}$, and note that those $\mathbf{y}$ indexed by $J(S)$ span at most a $k$-dimensional space.  By Lemma \ref{eq_supp_lem}, with probability one, either $|J(S)| \leq k$ or all $\mathbf{a}_j$ with $j \in J(S)$ have the same support $S'$.  Since the number of $\mathbf{a}_j$ with each support pattern $S$ is $k+1$, we have $|J(S)| \leq k+1$, and since there are a total of 
$N=(k+1){n \choose k}$ 
samples, it follows that $J(S) = k+1$ for each $S$. 
Thus, by Lemma \ref{eq_supp_lem} again, we can define a map $\phi: {[m] \choose k} \to {[m] \choose k}$ that takes each subset $S$ to the unique $S'$ which is the support of $\mathbf{a}_j$ for each $j \in J(S)$.  One easily checks that $\phi$ is injective and thus bijective, and also that (\ref{eq:spanhypoth}) holds where $\pi = \phi^{-1}$.  Thus, by Lemma \ref{subspaceLem_} there is a permutation $P$ and invertible diagonal $D$ such that $A = BPD$.   
\end{proof}

In the proof above, we first picked a matrix $A$ and then verified that a random setting of the $t^j_{S,\ell}$ determine a dataset $Y$ with a unique sparse coding.  Ideally, our sparse vectors that sample to $Y$ should be independent of the CS matrix $A$, as in Theorem \ref{unique_thm1}.  The closest we come to this using random methods is Theorem \ref{rand_thm2} from Section \ref{proof_prob_Section}, which we now prove.

\begin{proof}[Proof of Theorem \ref{rand_thm2}]
Consider a polynomial $g$ before the proof of Lemma \ref{k_lin_ind_lem} or a polynomial $h$ as in the proof of Lemma \ref{eq_supp_lem}.  By Proposition \ref{poly_prop}, we have for almost every setting of the $t^j_{S,\ell}$ that $g$ (resp. $h$) is not the zero polynomial in the indeterminates $A_{ij}$.  In particular, for almost every $A$, it is not the real number zero.  The proof remainder is as above.
\end{proof}

Our probabilistic theorems require $k+1$ points for each of the $T = {m \choose k}$ sparse supports.  How many samples $N$ are needed if samples are not systematically generated to meet this requirement but simply drawn randomly from a large set data?  If we require the failure probability to be bounded above by some small $\beta$, then as we explain $N = \frac{k+1}{\beta}{m \choose k}$ is sufficient. 

Consider first the number of ways for each of the $T$ support sets to contain at least $k+1$ points of the $N$ samples: it is the number of nonnegative integer solutions $d_i$ to the equation \[ (d_1 + k+1) + \cdots + (d_T + k +1) = N.\]
More generally, the number of solutions to such an equation $d_1 + \cdots + d_T = M$ is well-known to number ${M + T - 1 \choose T - 1}$.
It follows that $N$ samples (with supports chosen uniformly) fail to have $k+1$ points in each of $T$ support sets with a probability 
\[q = \frac{{N + T - 1 \choose T -1} - {N - Tk - 1 \choose T -1 }}{{N + T - 1 \choose T -1}} = 1 - 1/r.\]

Our goal therefore reduces to finding a sufficient quantity of samples $N = \alpha T$ guaranteeing a failure probability at most $\beta \geq q$.  
To achieve this, we shall show that for large $\alpha$, the quantity  $r$ is asymptotic to $1 + (k+1)/\alpha$; in particular, $q \approx (k+1)/\alpha$ and to obtain $q \leq \beta$, we need $N \geq \frac{k+1}{\beta}T$.
First, we approximate $r$ as follows:
\begin{eqnarray}
r &=& \prod_{i=1}^{T-1} \frac{(N+T-i)}{(N-Tk-i)} = \prod_{i=1}^{T-1} \left( 1 + \frac{k+1}{(N/T-k-i/T)}\right)\nonumber\\ 
  &=& 1 + (k+1) \sum_{i=1}^{T-1} \frac{1}{\alpha - k-i/T} + O(\alpha^{-2}).\nonumber
\end{eqnarray}
And then we estimate the sum on the right using calculus: 
\begin{eqnarray}
\sum_{i=1}^{T-1} \frac{1}{\alpha - k-i/T}  &\approx&  \int_{1/T}^{1} \frac{dx}{\alpha - k-x} = \ln \frac{\alpha - k-1/T}{\alpha - k-1}\nonumber\\ 
&\approx& \ln \frac{\alpha - k}{\alpha - k-1} \approx \frac{1}{\alpha}.\nonumber
\end{eqnarray}

We close this section by sketching proofs for the corollaries stated in Section \ref{proof_prob_Section}.

\begin{proof}[Proofs of Corollaries \ref{feas_cor},  \ref{feas_cor2}]
From CS theory (e.g., \cite{Baraniuk2008}), there exist matrices with the restricted isometry property (RIP) for any $m$, $n$, and $k$ satisfying (\ref{recovcond}).  Thus, using the ideas presented here, one can construct a nonzero polynomial in the entries of a general matrix $A$ that vanishes if and only if $A$ fails to obey the spark condition. In particular, almost every matrix $A$ satisfies (\ref{CSassump}).  The result now follows directly from Theorem \ref{rand_thm1};  Corollary \ref{feas_cor2} is proved similarly.
\end{proof}

\section{Discussion}\label{discussion}

In this article, we have proved theorems specifying bounds on the number of samples $N$ sufficient for guaranteeing uniqueness in sparse dictionary learning.  Such bounds depend critically on the precise problem statement.  In particular, a guarantee with certainty that {\it any} generation matrix $A$ is recoverable  (Theorem \ref{unique_thm1}) requires more samples than a guarantee with probability one for a particular $A$ (Theorem \ref{rand_thm1}) or for all but a Lebesgue measure zero set (Theorem \ref{rand_thm2}). We note that the lower bound for achieving the deterministic guarantee given here is a drastic tightening of an earlier bound \cite{hillar2011ramsey}, which used Ramsey theory. 

We emphasize that our theorems rely on systematic construction of training samples, which for Theorems \ref{rand_thm1} and \ref{rand_thm2} include random drawing. It is an open problem how much our bounds can be strengthened.  In regimes of moderate compression, computer experiments (e.g., \cite{aharon2006, NIPS10}) suggest that the amount of data needed for successful SDL is smaller than our estimates.

Another important item left unaddressed here is to find conditions under which sparse dictionary learning is guaranteed to converge. Although widely used in practice, sparse dictionary learning algorithms are typically non-convex, and finding a proof of convergence for  an SDL scheme can be challenging \cite{mairal2010online, gribonval2010, balavoine2012convergence, spielman2012, arora2013new, agarwal2013exact, agarwal2013learning, arora2014more}. 

{\it Comments about the mathematics:}  A mathematical technique initiated by Szele and Erd\H{o}s \cite{szele1943, erdos1947} called ``the probabilistic method" \cite{alon2011probabilistic} produces combinatorial structures using randomness that are difficult to find deterministically, much as we have done here.  We note that the general problem of finding deterministic constructions of objects easily formed using randomness can be very difficult.  For instance, how to deterministically construct optimal compressed sensing RIP matrices is still open, with the best results so far being \cite{bourgain2011,li2012} (see also \cite{monajemi2013} for a recent, large experimental study).  

{\it Consequences for applications:} A practical consequence of our work is the description of a feasible regime of overcompleteness for universality in sparse dictionary learning (Corollaries \ref{feas_cor} and \ref{feas_cor2}). Another interesting implication is that obvious structure in a learned dictionary should not be the only criterion of success for SDL. For instance, if compression by a matrix $\Phi$ is involved, then columns of a learned dictionary $B = \Phi \Psi D^{-1} P^{\top}$ might not reveal visually salient structure even though learning has converged and the resulting sparse codes accurately represent the underlying sparse structure of original data (see Fig.~\ref{ACS_fig_random} for an example of this phenomenon). Even though the dictionary might be difficult to interpret in the case of compression, extraction of sparse components might still be useful for subsequent steps of analysis.  For example, recent work has demonstrated that sparse coding of sensory signals can significantly improve performance in classification tasks \cite{yang2009,boureau2010,coates2011,rigamonti2011}.   It is possible that compressing data with a random matrix in SDL can reduce the number of model parameters, thereby speeding training.

{\it Implications for neuroscience:} Our work was originally motivated by the problem of communication between brain regions in theoretical neuroscience \cite{NIPS10}. Neural representations of sensory input can employ large numbers of local neurons in a brain area. However, the local neurons with axonal fibers projecting into a second brain region usually comprise a small fraction of all local neurons.  Thus, there is potentially an information bottleneck in the communication between brain areas. Assume  that the firing patterns in the sender region are large sparse vectors $\mathbf{a}$ and that the transmission to the receiver brain area can be described by compression with a linear transformation $\Phi$. Decoding of the compressed signals $\mathbf{y}$ in the receiving brain area can now be performed by overcomplete SDL, implemented by synaptic learning in local neural networks.  As shown in this paper, a decoding network that successfully reconstructs enough compressed data necessarily recovers original sparse representations. Such a theory of brain communication is compatible with early theories of efficient coding in the brain \cite{Attneave1954, barlow1959,olshausenfield1996} with one important difference.  The learning objective of traditional efficient coding is the faithful reconstruction of a sensory signal. In the adjusted theory \cite{NIPS10}, the objective is reconstruction of the locally available compressed signals. Our results here describe conditions under which learning driven by this weaker objective can still uniquely recover the sparse components of the full sensory signal. Of course, without compression, the learning objective coincides with that of traditional efficient coding.  For some other examples of compressed sensing being applied to neuroscience, see the survey \cite{ganguli2012}.

Our final application of uniqueness in SDL is to the analysis of neuroscience data. With measurement techniques improving, the analysis of signals from dense electrode arrays placed in the brain has become a pressing issue.  In particular, if stimulus features are not topographically organized in a neuronal circuit, each electrode might sense a mixture of signals from many neurons tuned to different features. Such a situation occurs in multi-electrode recordings from the hippocampal brain region in navigating rats. Individual neurons in this region encode different locations of the animal in its environment. Because neighboring cells do not encode similar locations, the local field potentials have no pronounced place selectivity and have been regarded as largely uninformative about the animal's location. Recently, however, performing SDL on these vastly subsampled data has led to the discovery of sparse place-selective field components. The extracted components tile the entire environment of the animal and encode its precise instantaneous localization \cite{agarwal2014spatially}. Uniqueness guarantees of the type presented here are important to ensure that the extracted behaviorally relevant components reflect structure of the data and are independent of the details of the particular SDL algorithm used for data analysis.

\appendix[I. Combinatorial matrix theory]\label{comb_mat_lem}
In this section, we prove Lemma \ref{subspaceLem_}, which was a main ingredient in proofs of our main theorems. We suspect that there is an appropriate generalization to matroids.  
First, however, we state the following easily deduced facts.
\begin{lemma}\label{basicSpan_lem}
Let $M \in \mathbb R^{n \times m}$.  If every set of $\ell + 1$ columns of $M$ are linearly independent, then for $S,S' \in  {[m] \choose \ell}$,
\begin{equation}\label{kplus1Fact}
\text{\rm Span}\{M_S\} = \text{\rm Span}\{M_{S'}\}  \  \Longrightarrow \  S = S'.
\end{equation}
If $M$ satisfies condition (\ref{CSassump}) and $S_1,S_2 \in  {[m] \choose k}$, then
\begin{equation}\label{2kIndFact}
 \text{\rm Span}\{M_{S_1 \, \cap \, S_2}\} =   \text{\rm Span}\{M_{S_1}\} \cap  \text{\rm Span}\{M_{S_2}\}.
 \end{equation}
\end{lemma}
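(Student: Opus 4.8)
The plan is to treat the two implications in Lemma \ref{basicSpan_lem} separately, each time turning a statement about spans into a statement about linear (in)dependence of a bounded number of columns of $M$, and then invoking the relevant independence hypothesis.

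For implication (\ref{kplus1Fact}) I would argue by contraposition. Suppose $S \neq S'$ with $|S| = |S'| = \ell$; since the two sets have equal size, there is an index $j \in S' \setminus S$. The column $M_j$ lies in $\mathrm{Span}\{M_{S'}\}$, hence in $\mathrm{Span}\{M_S\}$ by hypothesis, so $M_j$ is an $\mathbb{R}$-linear combination of $\{M_s : s \in S\}$. But then the $\ell+1$ distinct columns $\{M_s : s \in S\} \cup \{M_j\}$ are linearly dependent, contradicting the assumption that every $\ell+1$ columns of $M$ are independent. Hence $S = S'$. (If $\ell = m$ there is only one $\ell$-subset and the claim is vacuous, so we may assume $\ell + 1 \le m$.)

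For implication (\ref{2kIndFact}) the inclusion $\subseteq$ is immediate, since $S_1 \cap S_2 \subseteq S_1$ and $S_1 \cap S_2 \subseteq S_2$ give $\mathrm{Span}\{M_{S_1 \cap S_2}\} \subseteq \mathrm{Span}\{M_{S_1}\} \cap \mathrm{Span}\{M_{S_2}\}$. The substance is the reverse inclusion. Let $v \in \mathrm{Span}\{M_{S_1}\} \cap \mathrm{Span}\{M_{S_2}\}$ and write $v = \sum_{i \in S_1} \alpha_i M_i = \sum_{j \in S_2} \beta_j M_j$. Subtracting the two representations and grouping indices according to the partition of $S_1 \cup S_2$ into $S_1 \setminus S_2$, $S_1 \cap S_2$, and $S_2 \setminus S_1$ yields the single linear relation
\[
\sum_{i \in S_1 \setminus S_2} \alpha_i M_i + \sum_{i \in S_1 \cap S_2} (\alpha_i - \beta_i) M_i - \sum_{j \in S_2 \setminus S_1} \beta_j M_j = 0.
\]
The key observation is that this relation is supported on $S_1 \cup S_2$, and $|S_1 \cup S_2| \le |S_1| + |S_2| = 2k$ while also $S_1 \cup S_2 \subseteq [m]$, so the number of columns involved is at most $\min\{m, 2k\}$. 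By the spark condition (\ref{CSassump}) every such set of columns is linearly independent, forcing all coefficients to vanish: $\alpha_i = 0$ for $i \in S_1 \setminus S_2$, $\beta_j = 0$ for $j \in S_2 \setminus S_1$, and $\alpha_i = \beta_i$ for $i \in S_1 \cap S_2$. Consequently $v = \sum_{i \in S_1 \cap S_2} \alpha_i M_i \in \mathrm{Span}\{M_{S_1 \cap S_2}\}$, which gives $\supseteq$ and hence equality.

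The only non-routine point — the main obstacle, such as it is — lies in implication (\ref{2kIndFact}): one must notice that the two expansions of $v$ combine into a \emph{single} dependency supported on $S_1 \cup S_2$, whose size is at most $\min\{m,2k\}$, so that the spark condition applies verbatim and forces the coefficients to match on the overlap and vanish off it. Once this counting is in place, both parts reduce to bookkeeping.
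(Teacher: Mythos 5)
Your proposal is correct. The paper itself offers no proof of this lemma (it is introduced only as ``easily deduced facts''), and your argument is exactly the standard one it implicitly relies on: contraposition plus an $(\ell+1)$-column dependency for (\ref{kplus1Fact}), and for (\ref{2kIndFact}) the observation that two $k$-sparse representations of the same vector $v$ yield a dependency supported on $S_1 \cup S_2$, which the spark condition forces to be trivial. One cosmetic remark: rather than passing through the reformulation ``every $\min\{m,2k\}$ columns are independent,'' you could apply (\ref{CSassump}) verbatim to the two $k$-sparse coefficient vectors $\mathbf{a}_1$ (supported on $S_1$, entries $\alpha$) and $\mathbf{a}_2$ (supported on $S_2$, entries $\beta$): from $M\mathbf{a}_1 = v = M\mathbf{a}_2$ the condition gives $\mathbf{a}_1 = \mathbf{a}_2$ directly, which is precisely the coefficient matching you derive.
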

%

\begin{IEEEproof}[Proof of Lemma \ref{subspaceLem_}]
We shall induct on $k$; the base case $k = 1$ is worked out at the beginning of Section \ref{acsTheoremSection}. 
We first prove that $\pi$ is injective (and thus bijective).  Suppose that $S_1, S_2 \in {[m] \choose k}$ have $\pi(S_1) = \pi(S_2)$; then by (\ref{eq:spanhypoth}), 
\begin{equation*}
\text{\rm Span}\{A_{S_1}\} = \text{\rm Span}\{B_{\pi(S_1)}\} =  \text{\rm Span}\{B_{\pi(S_2)}\} =  \text{\rm Span}\{A_{S_2}\}.
\end{equation*}
In particular, using (\ref{kplus1Fact}) from Lemma \ref{basicSpan_lem} above with $\ell = k$ and $M = A$, it follows that $S_1 = S_2$ and thus $\pi$ is bijective.  Moreover, from this bijectivity of $\pi$ and the fact that every $k$ columns of $A$ are linearly independent, it follows that every $k$ columns of $B$ are also linearly independent.

We complete the proof, inductively, by producing a map:
\begin{equation}\label{inductive_tau}
\tau:  {[m] \choose k-1} \to {[m] \choose k-1}
\end{equation}
which satisfies $\text{\rm Span}\{A_{S}\} = \text{\rm Span}\{B_{\tau(S)}\} \ \text{for} \ S \in {[m] \choose k-1}$.
Let $\alpha = \pi^{-1}$ denote the inverse of $\pi$.  Fix $S = \{i_1,\ldots,i_{k-1}\} \in  {[m] \choose k-1}$, and set $S_1 = S \cup \{r\}$ and $S_2 = S \cup \{s\}$ for some $r,s \notin S$ with $r \neq s$ (so that $\alpha(S_1) \neq \alpha(S_2)$ by injectivity of $\alpha$).\footnote{Here we use the assumption that $k < m$ so that such a pair $r \neq s$ exists.}  Intersecting equations (\ref{eq:spanhypoth}) with $S = \alpha(S_1)$ and $S = \alpha(S_2)$  and then applying identity (\ref{2kIndFact}) with $M = A$, it follows that 
\begin{equation}\label{BsAteqn1}
 \text{\rm Span}\{B_{S},B_{r}\} \cap \text{\rm Span}\{B_{S},B_{s}\} =\text{\rm Span}\{A_{\alpha(S_1) \, \cap \, \alpha(S_2)}\}.
 \end{equation}
Since the left-hand side of (\ref{BsAteqn1}) is at least $k-1$ dimensional, the number of elements in the set $\alpha(S_1)  \cap \alpha(S_2)$ is either $k-1$ or $k$.  But $\alpha(S_1) \neq \alpha(S_2)$ so that $\alpha(S_1)  \cap \alpha(S_2)$ consists of $k-1$ elements.
Moreover,  $\text{\rm Span}\{B_{S}\} \subseteq \text{\rm Span}\{A_{\alpha(S_1) \,  \cap \, \alpha(S_2)}\}$
implies that $\text{\rm Span}\{B_{S}\} = \text{\rm Span}\{A_{\alpha(S_1) \,  \cap \, \alpha(S_2)}\}$.

The association $S \mapsto \alpha(S_1)  \cap \alpha(S_2)$ above defines a function $\gamma: {[m] \choose k-1} \to {[m] \choose k-1}$ with $\text{\rm Span}\{B_{S}\} = \text{\rm Span}\{A_{\gamma(S)}\}$.   Finally, we show that $\gamma$ is injective, which implies that $\tau = \gamma^{-1}$ is the map desired in (\ref{inductive_tau}) for the induction.
If $\gamma(S) = \gamma(S')$, then $\text{\rm Span}\{B_{S}\} = \text{\rm Span}\{B_{S'}\}$.  
By (\ref{kplus1Fact}) in Lemma \ref{basicSpan_lem} with $\ell = k-1$ and $M = B$, we have $S = S'$.  Thus, $\gamma$ is injective.
\end{IEEEproof}

\begin{example}\label{prop_example}
We show how the proof of Lemma \ref{subspaceLem_} works in the case $n = m = 3$, $k = 2$.  Suppose that $\pi: {[3] \choose 2} \to {[3] \choose 2}$ is \[\pi(\{1,2\}) = \{2,3\}, \, \pi(\{1,3\}) = \{1,2\}, \,\pi(\{2,3\}) = \{1,3\}.\] 
Following the proof of Lemma \ref{subspaceLem_}, one can check that 
\[ \gamma(\{1\}) = \{3\}, \ \gamma(\{2\}) = \{1\},\  \gamma(\{3\}) = \{2\}, \]
and thus we obtain the map $\tau = \gamma^{-1}$ as desired in (\ref{inductive_tau}).  The resulting permutation $P$ is the cycle $1 \mapsto 2 \mapsto 3 \mapsto 1$.
\end{example}

\section*{Acknowledgment}
We thank the following people for helpful discussions: Charles Cadieu, Melody Chan, Will Coulter, Jack Culpepper, Mike DeWeese, Rina Foygel, Guy Isely, Amir Khosrowshahi, Matthias Mnich, and Chris Rozell.  We also thank the anonymous referees for comments that considerably improved this work, including supplying ideas for proofs of several results.  This work was supported by grant IIS-1219212 from the National Science Foundation. CJH was also partially supported by an NSF All-Institutes Postdoctoral Fellowship administered by the Mathematical Sciences Research Institute through its core grant DMS-0441170.  FTS was also supported by the Applied Mathematics Program within the Office of Science Advanced Scientific Computing Research of the U.S. Department of Energy under contract No. DE-AC02-05CH11231.

\ifCLASSOPTIONcaptionsoff
  \newpage
\fi



\bibliographystyle{IEEEtran}
\bibliography{acs}
%

%
%

%


\begin{IEEEbiographynophoto}{Christopher J. Hillar}
completed a B.S. in Mathematics and a B.S. in Computer Science at Yale University.  Supported by an NSF Graduate Research Fellowship, he received his Ph.D. in Mathematics from the University of California, Berkeley in 2005. From 2005 - 2008, he was a Visiting Assistant Professor and NSF Postdoctoral Fellow at Texas A \& M University. From 2008 - 2010, he was an
NSF Mathematical Sciences Research Institutes Postdoctoral Fellow at  MSRI in Berkeley, CA.  In 2010, he joined the Redwood Center for Theoretical Neuroscience at UC Berkeley.
\end{IEEEbiographynophoto}
\begin{IEEEbiographynophoto}{Friedrich T. Sommer}
received the diploma in physics from the University of Tuebingen, in 1987, the Ph.D. in physics from the University of Duesseldorf, in 1993, and his habilitation in computer science from University of Ulm, in 2002. He is an Adjunct Professor at the Redwood Center for Theoretical Neuroscience and at the HelenWills Neuroscience Institute, University of California, Berkeley, since 2005. In 2003 - 2005, he was a Principal Investigator at the Redwood Neuroscience Institute, Menlo Park, CA. In 1998 - 2002, he was an Assistant Professor at the Department of Computer Science, University of Ulm, after completing Postdocs at Massachusetts Institute of Technology, Cambridge, MA, and the University of Tuebingen.
\end{IEEEbiographynophoto}






\end{document}